\newtheorem{theorem}{Theorem}
\newtheorem{example}{Example}
\newtheorem{corollary}{Corollary}
\newtheorem{definition}{Definition}
\def\psfancypar#1#2{\begingroup\def\par{\endgraf\endgroup\lineskiplimit=0pt}
               \setbox2=\hbox{\large\sc #2}
               \newdimen\tmpht \tmpht \ht2 \advance\tmpht by \baselineskip
               \font\hhuge=Times-Bold at \tmpht
               \setbox1=\hbox{{\hhuge #1}}
               \count7=\tmpht \count8=\ht1
               \divide\count8 by 1000 \divide\count7 by \count8 
               \tmpht=.001\tmpht\multiply\tmpht by \count7 
               \font\hhuge=Times-Bold at \tmpht
               \setbox1=\hbox{{\hhuge #1}}
               \noindent
                \hangindent1.05\wd1
               \hangafter=-2 {\hskip-\hangindent
               \lower1\ht1\hbox{\raise1.0\ht2\copy1}%
                \kern-0\wd1}\copy2\lineskiplimit=-1000pt}
\newcommand{\beq}{\begin{equation}}
\newcommand{\eeq}{\end{equation}}
\newcommand{\bqa}{\begin{eqnarray}}
\newcommand{\eqa}{\end{eqnarray}}
\newcommand{\bqn}{\begin{eqnarray*}}
\newcommand{\eqn}{\end{eqnarray*}}
\newcommand{\be}{\begin{enumerate}}
\newcommand{\ee}{\end{enumerate}}
\newcommand{\bi}{\begin{itemize}}
\newcommand{\ei}{\end{itemize}}
\newcommand{\bd}{\begin{description}}
\newcommand{\ed}{\end{description}}
\newcommand{\ba}{\begin{array}}
\newcommand{\ea}{\end{array}}
\newcommand{\bde}{\begin{definition}}
\newcommand{\ede}{\end{definition}}
\newcommand{\bex}{\begin{example}}
\newcommand{\eex}{\end{example}}
\def\boxit#1{\vbox{\hrule\hbox{\vrule\kern3pt
        \vbox{\kern3pt#1\kern3pt}\kern3pt\vrule}\hrule}}
\def\reals{ { {\rm  I \kern-0.15em R }  } }
\def\complex{ {\,{{\rm C} \kern-0.50em \raise0.20ex {  |}}\, }}
\def\0bf{{\bf 0}}
\def\1bf{{\bf 1}}
\def\2bf{{\bf 2}}
\def\3bf{{\bf 3}}
\def\4bf{{\bf 4}}
\def\5bf{{\bf 5}}
\def\6bf{{\bf 6}}
\def\7bf{{\bf 7}}
\def\8bf{{\bf 8}}
\def\9bf{{\bf 9}}
\def\Rbf{{\bf R}}
\def\Rxx{\Rbf_{\ssstyle X\kern-.1em X}}
\let\ssstyle=\scriptscriptstyle
\def\Kout{\setbox1=\hbox{\Huge\bf K}\hbox to
1.05\wd1{\hspace{.05\wd1}
\def\Sout{\setbox1=\hbox{\Huge\bf S}\hbox to 1.05\wd1{\hspace{.05\wd1}

\title{\LARGE The Han-Kobayashi Region for a Class of Gaussian Interference Channels with Mixed Interference}

\author{{Yu Zhao, Fangfang Zhu and Biao Chen} \\
Department of EECS \\
Syracuse University \\
yzhao05\{fazhu, bichen\}@syr.edu
}

\begin{document}
\maketitle
\begin{abstract}
A simple encoding scheme based on Sato's non-na\"{i}ve frequency division is proposed for a class of Gaussian interference channels with mixed interference. The achievable region is shown to be equivalent to that of Costa's {\em noiseberg} region for the one-sided Gaussian interference channel. This allows for an indirect proof that this simple achievable rate region is indeed equivalent to the Han-Kobayashi (HK) region with Gaussian input and with time sharing for this class of Gaussian interference channels with mixed interference.
\end{abstract}

\section{Introduction}
The interference channel (IC) describes a network where multiple transmitters communicate with their intended receivers via a common medium. The characterization of the capacity region for a two-user IC is an open problem except for the strong and very strong interference cases \cite{Carleial:75IT,Sato:81IT,Han&Kobayashi:81IT}. To date, the largest achievable rate region is the celebrated Han-Kobayashi (HK) region that employs rate splitting at the transmitters and simultaneous decoding at the receivers \cite{Han&Kobayashi:81IT}. Not surprisingly, for those ICs whose capacity regions are completely characterized, it is without an exception that the capacity region coincides with the HK region.

However, the general HK region involves a time sharing variable that makes its evaluation intractable. For the Gaussian interference channel (GIC), another difficulty is the input distribution. A two-user GIC in its standard form can be represented as
\beq
\begin{array} {ccc}
Y_1&=&X_1+bX_2+Z_1, \\
Y_2&=&aX_1+X_2+Z_2,
\end{array}
\label{eq:GIC}
\eeq
where $X_1$ and $X_2$ are the input signals and are subject to respective power constraints $P_1$ and $P_2$; $Y_1$ and $Y_2$ are the received signals; $Z_1$ and $Z_2$ are Gaussian noises of unit variance and are independent of the inputs $X_1$ and $X_2$. This model is depicted in Fig.~\ref{Fig:f1}. While for all the cases where the capacity results are known for a GIC, the optimal input distribution is invariably Gaussian, it is not yet known (or proven) that such is the case for the general GIC. 

There has been recent progress in obtaining computable subregion of the HK achievable region using Sato's non-na\"{i}ve frequency division \cite{Shang&Chen:07ISIT}. For the one-sided GIC (denoted as ZGIC) shown in Fig.~\ref{Fig:5}(a),  Motahari and Khandani established that such a non-na\"{i}ve frequency division scheme achieves the HK region with Gaussian input \cite{Motahari&Khandani:09IT}. Most recently, Costa introduced the so-called {\em noiseberg} scheme which uses water filling to achieve optimal power sharing between two orthogonal dimensions \cite{Costa:Noiseberg}. It turns out, as shown in the next section, that this simple noiseberg scheme achieves precisely the same HK region with Gaussian input. 

\begin{figure}
\centering
\includegraphics{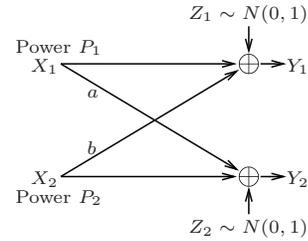}
\caption{ Two-user GIC}\label{Fig:f1}
\end{figure}

This paper focuses on GICs with mixed interference (MGIC) and with $ab\geq 1$, $a\leq 1$ and $b\geq 1$ (cf. Fig.~\ref{Fig:f1} and Eq.~(\ref{eq:GIC})). We describe a simple coding scheme that combines the noiseberg scheme with that of simultaneous decoding at the receiver experiencing strong interference. The obtained rate region is subsequently shown to coincide with the HK region with Gaussian input.


The rest of the paper is organized as follows. In Section \ref{sec:ZGIC}, we review the noiseberg scheme for the ZGIC and provide a proof of its equivalence to the HK region with Gaussian input. Section \ref{MGIC} describes the coding scheme for a class of MGIC and proves that the scheme achieves the HK region with Gaussian input. Section \ref{Conclusion} concludes this paper.

\section{Noiseberg Region for the ZGIC}\label{sec:ZGIC} 

We consider the degraded GIC shown in Fig.~\ref{Fig:5}(b), which is equivalent to the ZGIC with $a<1$ in Fig.~\ref{Fig:5}(a)\cite{Costa:85IT}.

The noiseberg region, denoted by $\mathcal{R_{N}}$ and introduced by Costa in \cite{Costa:Noiseberg} for a ZGIC with weak interference ($a<1$ in Fig.~\ref{Fig:5}(a)) is the set of all nonnegative rate pairs $(R_1, R_2)$ satisfying
\small
\bqa
R_{1} &\leq& \bar{\lambda}R_{1 \bar{\lambda}} + \lambda R_{1 \lambda}, \nonumber \\
R_{2}  &\leq& \bar{\lambda}R_{2 \bar{\lambda}}, \nonumber 
\eqa
\normalsize
where
\small
\begin{eqnarray}
R_{1 \bar{\lambda}} &\leq& \frac{1}{2} \log \left(1+ \frac{P_{1A}}{\bar{\lambda}}\right) +\frac{1}{2} \log \left(1+ \frac{a^{2} \frac{P_{1C}}{\bar{\lambda}}}{1+a^{2}\frac{P_{1A}}{\bar{\lambda}}+ \frac{P_{2}}{\bar{\lambda}}}\right), \nonumber \\\label{eq:2}\\
R_{2 \bar{\lambda}}&\leq& \frac{1}{2} \log \left(1+ \frac{\frac{P_{2}}{\bar{\lambda}}}{1+a^{2}\frac{P_{1A}}{\bar{\lambda}}}\right), \label{eq:3}\\
R_{1 \lambda} &\leq& \frac{1}{2} \log \left(1+ \frac{P_{1B}}{\lambda}\right) \label{eq:1},
\end{eqnarray}
\normalsize
and the power limits $P_{1A}$, $P_{1B}$ and $P_{1C}$ are determined by two parameters $h$ and $\lambda$ such that
\small
\begin{eqnarray}
\frac{P_{1A}}{\bar{\lambda}} &=& P_{1}-\frac{P_{2}\lambda}{a^{2}\bar{\lambda}}- \lambda \min \left\{h, \frac{1-a^2}{a^2}\right\} \nonumber \\ &&-\max\left \{0,h - \frac{1-a^2}{a^2}\right\}, \nonumber \\
\frac{P_{1B}}{\lambda} &=& P_{1}+\frac{P_{2}}{a^{2}}+ \bar{\lambda} \min\left\{h, \frac{1-a^{2}}{a^{2}}\right\},\nonumber \\
\frac{P_{1C}}{\bar{\lambda}} &=& \max\left\{0, h-\frac{1-a^{2}}{a^{2}}\right\}. \nonumber 
\end{eqnarray}
\normalsize

\begin{figure}[!t]
\centering
\subfloat[ZGIC]{\includegraphics{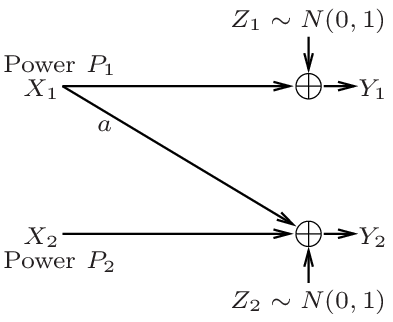}}
\subfloat[Degraded GIC ]{\includegraphics{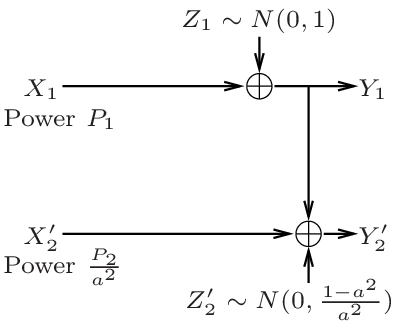}}
\caption{ZGIC}
\label{Fig:5}
\end{figure}

Costa \cite{Costa:Noiseberg} showed that $\mathcal{R_{N}}$ is achievable for the ZGIC with weak interference by a coding scheme that uses a two-band non-na\"{i}ve frequency division multiplexing (FDM) with water filling for optimal power allocation between the two subbands. The coding scheme, as well as its achievable region, involves two parameters $0\leq \lambda\leq 1$ and $h\geq 0$. They vary over the admissible region as shown in Fig. \ref{Fig:f2}, resulting in different transmission schemes depending on the values of the parameters. The parameter $\lambda$ determines how to allocate the frequency band.

\begin{itemize}
\item
The $\lambda$ subband is reserved for the communication between transmitter 1 and receiver 1.
\item 
The $\bar{\lambda}$ subband is shared by both transceiver pairs and the corresponding coding scheme is determined by the other parameter $h$.
\end{itemize}

As the noise $Z_2$ does not affect the transmission of $X_1$, water filling allows the overall power level in the $\lambda$-subband to be raised above that of the $\bar{\lambda}$-subband, with part of the noise spectrum of $Z_2$ floating above the signal level. This phenomenon, i.e., the existence of difference in heights of power spectrum for the two subbands is referred to as the \emph{noiseberg}. The parameter h is defined as the height of total power density in the $\lambda$-subband above that of $X_2$'s power density in the $\bar{\lambda}$-subband. Different $h$ values divide the admissible region for the parameter pairs into two regions, each employing a different coding scheme in the $\bar{\lambda}$-subband:


\noindent {\bf Multiplex region} This corresponds to $h \leq \frac{1-a^2}{a^2}$. As shown in Fig.~\ref{Fig:f3}, $Z_2'$ prevents user 1's power from spilling over to the $\bar{\lambda}$-band thus no rate-splitting is involved. Receiver 2 decodes $W_1$ first, subtracts it and decodes $W_2$.

\noindent  {\bf Overflow region} This corresponds to $h > \frac{1-a^2}{a^2}$. As shown in Fig.~\ref{Fig:f4}, water-filling of user 1's power occurs as the power spills over from the $\lambda$-subband to the $\bar{\lambda}$-subband. The encoding scheme in the $\bar{\lambda}$ subband thus involves rate splitting for $W_1$: a common message $W_{1c}$ with power $P_{1c}$ decoded by both receivers and a private message $W_{1p}$ with power $P_{1A}$ decoded only by receiver 1. Receiver 2 decodes $W_{1c}$ first, subtracts it, and decodes $W_2$, all the while treating $W_{1p}$ as noise.

\begin{figure}
\centering
\includegraphics{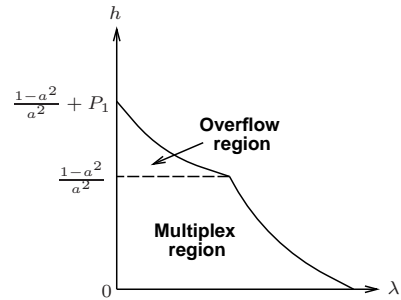}
\caption{Admissible region for $(\lambda, h)$}\label{Fig:f2}
\end{figure}
\begin{figure}
\centering
\includegraphics{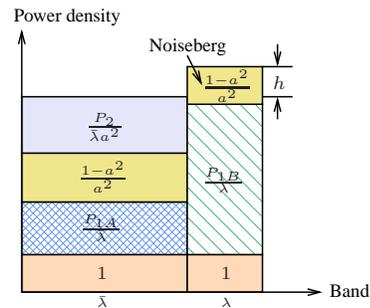}
\caption{Multiplex region}\label{Fig:f3}
\end{figure}
\begin{figure}
\centering
\includegraphics{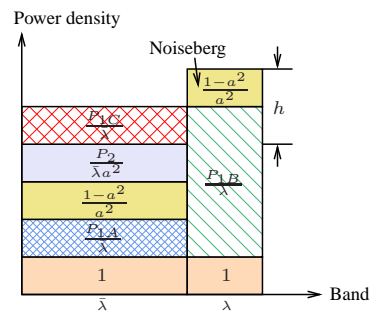}
\caption{Overflow region}\label{Fig:f4}
\end{figure}

It is remarkable that such a simple transmission scheme turns out to achieve precisely the same HK region with time sharing and with Gaussian input.
\begin{theorem} \label{lemma:Noiseberg}
For the weak ZGIC, $\mathcal{R_{N}} = \mathcal{R_{HK}}$.
\end{theorem}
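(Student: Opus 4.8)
The plan is to establish the set equality $\mathcal{R_{N}} = \mathcal{R_{HK}}$ by proving the two inclusions separately, so the first step is to write down the HK region with Gaussian input and time sharing explicitly for the weak ZGIC. Since in the ZGIC only receiver~2 suffers interference (from $X_1$), user~2 causes no interference to receiver~1 and hence need not rate split; the HK description therefore reduces to a splitting of $W_1$ into a common part $W_{1c}$ and a private part $W_{1p}$, together with a time-sharing variable $Q$. For a fixed Gaussian input, a fixed power split of $X_1$, and a fixed value of $Q$, the achievable $(R_1,R_2)$ form a polytope determined by the single-user bound on $R_1$ at the interference-free receiver~1 and by the joint decoding constraints for $(W_{1c},W_2)$ at receiver~2. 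The region $\mathcal{R_{HK}}$ is then the closure of the union of these polytopes over all admissible power splits and over $Q$.

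For the inclusion $\mathcal{R_{N}} \subseteq \mathcal{R_{HK}}$, I would exhibit the noiseberg scheme as one particular HK strategy. The two-band FDM is exactly a binary time-sharing variable $Q$ that selects the $\bar\lambda$-band with probability $\bar\lambda$ and the $\lambda$-band with probability $\lambda$. In the $\lambda$-band only a private message of user~1 is sent over a clean channel, contributing the term $R_{1\lambda}$ in \eqref{eq:1}; in the $\bar\lambda$-band, the distinction between the multiplex region and the overflow region corresponds precisely to setting the common power $P_{1C}$ to zero or to a positive value, which is a legitimate HK power split, and the rate expressions \eqref{eq:2}--\eqref{eq:3} are the corresponding HK constraints at the two receivers. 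The water-filling formulas for $P_{1A}$, $P_{1B}$, and $P_{1C}$ are simply one admissible power allocation respecting the constraints $P_1$ and $P_2$. Hence every noiseberg rate pair is achieved by a Gaussian HK input with a two-valued time-sharing variable, giving the inclusion.

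The reverse inclusion $\mathcal{R_{HK}} \subseteq \mathcal{R_{N}}$ is the crux, and I expect it to be the main obstacle: one must show that the restriction to two bands and to the specific water-filling power profile loses nothing relative to the full HK region, which a priori permits an arbitrary time-sharing variable and an arbitrary common/private power split. Since both regions are convex, I would compare their support functions and show that, for every weight $(\mu_1,\mu_2)\geq 0$, the weighted sum-rate maximum $\max \mu_1 R_1 + \mu_2 R_2$ over $\mathcal{R_{HK}}$ is attained by a time-sharing variable taking at most two values, with the split between them and the power allocation given exactly by the noiseberg water-filling condition. The delicate points will be verifying that the KKT conditions for the optimal power allocation reproduce the height parameter $h$ and the threshold $\frac{1-a^2}{a^2}$ that separates the multiplex and overflow regimes, and confirming that no boundary point of $\mathcal{R_{HK}}$ requires more than the two bands the noiseberg scheme provides. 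Once the two boundary descriptions are shown to coincide, convexity of both regions together with the first inclusion yields the full set equality.
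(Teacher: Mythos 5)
Your first inclusion ($\mathcal{R_{N}} \subseteq \mathcal{R_{HK}}$) is fine: the two-band FDM is a binary time-sharing variable and the noiseberg power profile is one admissible Gaussian HK strategy. The problem is the reverse inclusion, which you correctly identify as the crux but then do not actually prove. Everything after ``I would compare their support functions'' is a statement of intent: you assert that the weighted sum-rate maximum over $\mathcal{R_{HK}}$ is attained by a time-sharing variable taking at most two values, and that the KKT conditions reproduce the water-filling height $h$ and the threshold $\frac{1-a^2}{a^2}$, but you give no argument for either claim. Those two claims \emph{are} the theorem: the HK region a priori allows an arbitrary number of time-sharing states $q$, arbitrary splits $\alpha_i$, and arbitrary per-state powers coupled only through the average power constraints, and collapsing all of that to two bands with a specific water-filling profile is a substantial optimization argument --- essentially the content of Motahari and Khandani's ZGIC theorem, whose proof is long and delicate. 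Note also that a generic Carath\'eodory-type reduction does not give ``at most two values'' here, since the two averaged power constraints together with the two rate coordinates leave more active dimensions than that; the reduction to two bands needs channel-specific structure, not just convexity.

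The paper avoids this heavy lifting entirely by arguing indirectly: it cites the Motahari--Khandani result that for the ZGIC the non-na\"{i}ve FDM region $\mathcal{R_{FDM}}$ already equals $\mathcal{R_{HK}}$, so it only has to show $\mathcal{R_{N}} = \mathcal{R_{FDM}}$. That remaining step is an elementary convexity argument: any two-band allocation not of noiseberg type must have a flat spectrum top with $P_{1c\bar{\lambda}}/\bar{\lambda} = P_{1c\lambda}/\lambda$, and such a configuration only achieves chord points of the associated broadcast-channel capacity curve, which is convex, so those points are dominated by the no-frequency-division scheme; configurations with more bands are handled by merging bands used solely by $X_1$. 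So your outline is not wrong in direction, but as written it defers the entire difficulty to ``delicate points to be verified.'' To make it a proof you must either carry out the full boundary characterization of $\mathcal{R_{HK}}$ yourself, or do what the paper does and route the argument through the known equivalence $\mathcal{R_{FDM}} = \mathcal{R_{HK}}$.
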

\begin{proof}
Motahari and Khandani showed that for the ZGIC, the non-na\"{i}ve FDM region, denoted by $\mathcal{R_{FDM}}$, is equivalent to $\mathcal{R_{HK}}$, whose boundary points can be characterized by the optimization problem \cite[Eq. (151)]{Motahari&Khandani:09IT}. It suffices to verify the equivalence between $\mathcal{R_{N}}$ and $\mathcal{R_{FDM}}$.

We start by considering water filling in the two-band FDM applied to the degraded GIC shown in Fig.~\ref{Fig:5}(b). First, we split $W_1$ into private message $W_{1p}$ with power constraint $P_{1p}$ and common message $W_{1c}$ with power constraint $P_{1c}$ such that $P_{1p}+P_{1c}=P_1$. Power allocation into $\lambda$ and $\bar{\lambda}$ subbands is done in the following order. First, $P_{1p}$ is allocated to the two subbands in an arbitrary way. On top of that, $P_2$ is allocated to the two subbands via waterfilling. As $Y_2'$ sees additional noise $Z_2'$, $P_2$ is allocated on top of $Z_2'$ (see, e.g., Fig.~\ref{Fig:6}(d)). Finally, $P_{1c}$ is allocated to the two subbands, again, using waterfilling.
\begin{figure*}[!t]
\centering
\subfloat[]{\label{Fig:6a}\includegraphics{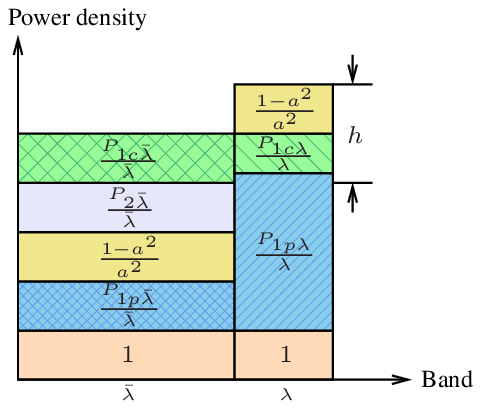}}
\subfloat[]{\label{Fig:6b}\includegraphics{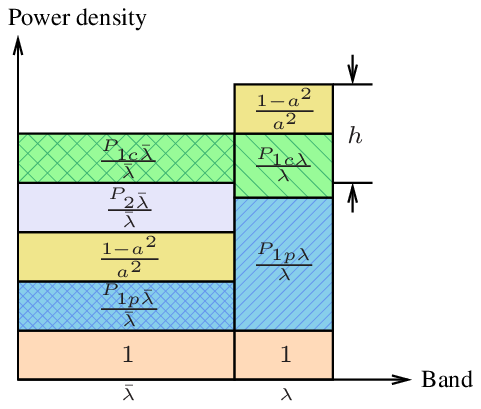}}
\\\subfloat[]{\label{Fig:6c}\includegraphics{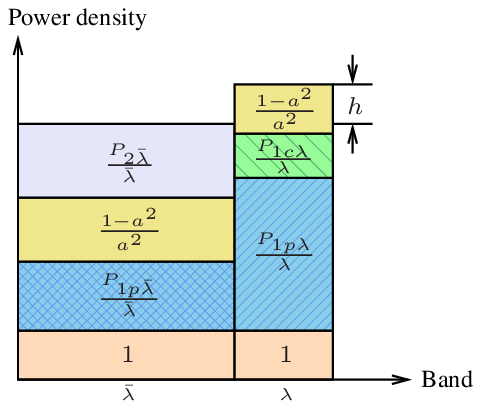}}
\subfloat[]{\label{Fig:6d}\includegraphics{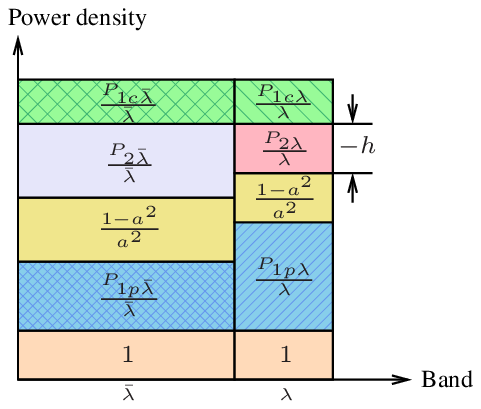}}
\caption{Possible power allocation outcomes of the two-band non-na\"{i}ve FDM scheme with ZGIC}
\label{Fig:6}
\end{figure*}

Depending on $P_{1p}$ and its allocation between the two subbands, there are four possible power allocation outcomes of this scheme, as shown in Fig.~\ref{Fig:6}. Since the scenarios illustrated in Fig.~\ref{Fig:6}(a)(b)(c) are equivalent to noiseberg cases, it remains to argue that the power allocation scheme with flat spectrum top as in Fig.~\ref{Fig:6}(d) is not optimal. This is because the achievable rates under such a scheme are formed by convex combinations of points on the curve of associated broadcast channel capacity, as the flat top requires $\frac{P_{1c \bar{\lambda}}}{\bar{\lambda}} = \frac{P_{1c \lambda}}{\lambda}$. As the broadcast channel capacity curve is convex, we can only achieve the points on the chord, which are dominated by the points on the curve corresponding to the scheme with no frequency division. Thus they are not optimal.

Next we generalize this conclusion to three-band FDM. Similarly we argue that the power-bandwidth allocation schemes with two or more subbands sharing the same flat top are not optimal. If two subbands are occupied solely by the transmission of $X_{1}$, they can be merged into one and this reduces to one of the noiseberg scenarios.
\end{proof}

\section{Achievable Region for a Class of MGIC} \label{MGIC}
For the MGICs with $0<a \leq 1$, $b \geq 1$ and $ab \geq 1$, the HK region with Gaussian input can be simplified to be the set of all rate pairs $(R_1, R_2)$ satisfying
\small
\bqa
R_1 &\leq& \sum\limits_{i=1}^{q} \lambda_i \left\{\frac{1}{2} \log (1 + P_{1i})\right\},\nonumber \\
R_2 &\leq&  \sum\limits_{i=1}^{q} \lambda_i \left\{\frac{1}{2} \log \left(1 + \frac{P_{2i}}{1+a^2 \alpha_i P_{1i}}\right)\right\}, \nonumber \\
R_1 + R_2 &\leq&  \sum\limits_{i=1}^{q} \lambda_i \left\{\frac{1}{2} \log \left(1+ \frac{P_{2i}+a^2\bar{\alpha}_i P_{1i}}{1+a^2\alpha_i P_{1i}}\right) \right. \nonumber \\
&&\left. + \frac{1}{2} \log (1+\alpha_i P_{1i})\right\},\nonumber
\eqa
\normalsize
where $q \in \mathbb{N}$, $i \in \{1, \dots, q\}$, $0 \leq \alpha_i \leq 1$, , $\bar{\alpha}_i = 1-\alpha_i$, $\lambda_i \geq 0$, $\sum\limits_{i=1}^{q} \lambda_i = 1$, $\sum\limits_{i=1}^{q} \lambda_i P_{1i} \leq P_1$ and $\sum\limits_{i=1}^{q} \lambda_i P_{2i} \leq P_2$. We refer to $\mathcal{R_{HK}}$ as the above HK region with Gaussian input. The difficulty in using this region is largely due to the presence of the time sharing variable.

We now describe a simple transmission scheme for a MGIC with $ab\geq 1$. The scheme resembles the noiseberg scheme as it also utilizes the two-band non-na\"{i}ve FDM. Specifically,
in the $\lambda$-subband, only transmitter 1 transmits and receiver 1 decodes $W_1$ with a rate constraint defined in (\ref{eq:1}).

In the $\bar{\lambda}$-subband, transmitter 2 does not use rate splitting. Transmitter 1, on the other hand, employs two encoding schemes depending on the value of $h$. The corresponding decoding schemes are also different. We describe them in details below.

\noindent{\bf Multiplex region} This corresponds to $h \leq \frac{1-a^2}{a^2}$. Sequential decoding is used at both receivers. Receiver 1 first decodes $W_2$, which leads to the constraint
\small
\bqa
R_{2\bar{\lambda}} & \le & \frac{1}{2} \log \left(1+ \frac{b^{2}P_{2}/\bar{\lambda}}{1+ P_{1A}/\bar{\lambda}}\right). \label{eq:4}
\eqa
\normalsize
Subsequently, $X_2$ is subtracted from $Y_1$ and $W_1$ is decoded with constraint (\ref{eq:2}). Receiver 2 decodes $W_2$ with constraint (\ref{eq:3}). Comparing (\ref{eq:3}) and (\ref{eq:4}), we see that (\ref{eq:4}) is redundant. Therefore sequential decoding in the multiplex region achieves $\mathcal{R_{N}}$.

\noindent{\bf Overflow region} This corresponds to $h > \frac{1-a^2}{a^2}$. Receiver 1 employs simultaneous decoding while receiver 2 still uses sequential decoding. For receiver 1, let $S_{1}$, $T_{1}$ and $T_{2}$ be, respectively, the rates of transmitter 1's private message $W_{1p}$, common message $W_{1c}$ and $W_2$. Then $R_{1\bar{\lambda}} = S_{1}+T_{1}$ and $R_{2\bar{\lambda}} = T_{2}$. Evaluation of error probability will give us
\small
\begin{eqnarray}
S_{1} &\leq& \frac{1}{2}\log \left(1+\frac{P_{1A}}{\bar{\lambda}}\right), \label{eq:5}\\
T_{1} &\leq& \frac{1}{2}\log \left(1+\frac{P_{1C}}{\bar{\lambda}}\right),\\
T_{2} &\leq& \frac{1}{2}\log \left(1+\frac{b^{2}P_2}{\bar{\lambda}}\right),\\
S_{1}+T_{1} &\leq& \frac{1}{2}\log \left(1+\frac{P_{1A}}{\bar{\lambda}}+\frac{P_{1C}}{\bar{\lambda}}\right),\\
S_{1}+T_{2} &\leq& \frac{1}{2}\log \left(1+\frac{P_{1A}}{\bar{\lambda}}+\frac{b^{2}P_2}{\bar{\lambda}}\right),\\
T_{1}+T_{2} &\leq& \frac{1}{2}\log \left(1+\frac{P_{1C}}{\bar{\lambda}}+\frac{b^{2}P_2}{\bar{\lambda}}\right),\\
S_{1}+T_{1}+T_{2} &\leq& \frac{1}{2}\log \left(1+\frac{P_{1A}}{\bar{\lambda}}+\frac{P_{1C}}{\bar{\lambda}}+\frac{b^{2}P_2}{\bar{\lambda}}\right).
\end{eqnarray}
\normalsize

As receiver 2 decodes $W_{1C}$ and $W_{2}$ sequentially, there will be two constraints
\small
\begin{eqnarray}
T_{1} &\leq& \frac{1}{2} \log \left(1+ \frac{a^{2} \frac{P_{1C}}{\bar{\lambda}}}{1+a^{2}\frac{P_{1A}}{\bar{\lambda}}+ \frac{P_2}{\bar{\lambda}}}\right), \\
T_{2} &\leq& \frac{1}{2} \log \left(1+ \frac{ \frac{P_2}{\bar{\lambda}}}{1+a^{2}\frac{P_{1A}}{\bar{\lambda}}}\right). \label{eq:6}
\end{eqnarray}
\normalsize

Fourier-Motzkin elimination on (\ref{eq:5})-(\ref{eq:6}) gives us
\small
\begin{eqnarray}
R_{1\bar{\lambda}} &\leq& \frac{1}{2}\log \left(1+\frac{P_{1A}}{\bar{\lambda}}+\frac{P_{1C}}{\bar{\lambda}}\right), \label{eq:7}\\
R_{1\bar{\lambda}} &\leq& \frac{1}{2}\log \left(1+\frac{P_{1A}}{\bar{\lambda}}\right) \nonumber \\ &&+  \frac{1}{2} \log \left(1+ \frac{a^{2} \frac{P_{1C}}{\bar{\lambda}}}{1+a^{2}\frac{P_{1A}}{\bar{\lambda}}+ \frac{P_2}{\bar{\lambda}}}\right), \label{eq:9}\\
R_{2\bar{\lambda}} &\leq& \frac{1}{2} \log \left(1+ \frac{ \frac{P_2}{\bar{\lambda}}}{1+a^{2}\frac{P_{1A}}{\bar{\lambda}}}\right), \label{eq:10}\\
R_{1\bar{\lambda}}+R_{2\bar{\lambda}} &\leq& \frac{1}{2}\log \left(1+\frac{P_{1A}}{\bar{\lambda}}+\frac{P_{1C}}{\bar{\lambda}}+\frac{b^{2}P_2}{\bar{\lambda}}\right), \label{eq:11}\\
R_{1\bar{\lambda}}+R_{2\bar{\lambda}} &\leq& \frac{1}{2}\log \left(1+\frac{P_{1A}}{\bar{\lambda}}+\frac{b^{2}P_2}{\bar{\lambda}}\right) \nonumber \\ &&+  \frac{1}{2} \log \left(1+ \frac{a^{2} \frac{P_{1C}}{\bar{\lambda}}}{1+a^{2}\frac{P_{1A}}{\bar{\lambda}}+ \frac{P_2}{\bar{\lambda}}}\right), \label{eq:12}\\
R_{1\bar{\lambda}} + 2R_{2\bar{\lambda}} &\leq& \frac{1}{2}\log \left(1+\frac{P_{1A}}{\bar{\lambda}}+\frac{b^{2}P_2}{\bar{\lambda}}\right) \nonumber \\&&+ \frac{1}{2}\log \left(1+\frac{P_{1C}}{\bar{\lambda}}+\frac{b^{2}P_2}{\bar{\lambda}}\right). \label{eq:8}
\end{eqnarray}
\normalsize

Then the achievable rate region in overflow region is
\small
\begin{eqnarray}
R_{1} &\leq& \bar{\lambda}R_{1 \bar{\lambda}} + \lambda R_{1 \lambda}, \nonumber \\
R_{2}  &\leq& \bar{\lambda}R_{2 \bar{\lambda}}, \nonumber
\end{eqnarray}
\normalsize
where $R_{1 \bar{\lambda}}$, $R_{2 \bar{\lambda}}$ and $R_{1 \lambda}$ satisfy (\ref{eq:7})-(\ref{eq:8}) and (\ref{eq:1}). In the appendix we simplify this region and prove that it is equivalent to $\mathcal{R_{N}}$.

Combining the results in both the multiplex and overflow regions, we conclude that $\mathcal{R_{N}}$ is achievable for the MGICs with $ab>1$. Thus we have,

\begin{theorem} \label{theorem: Mixed IC}
For the MGICs with $ab>1$, $\mathcal{R_{N}}$ associated with the ZGIC obtained by removing the interfering link with gain $b$ is achievable.
\end{theorem}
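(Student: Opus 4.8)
The plan is to prove achievability constructively, showing that the two-band non-na\"{i}ve FDM scheme described above realizes every boundary point of $\mathcal{R_{N}}$ as the parameters $(\lambda,h)$ sweep the admissible region of Fig.~\ref{Fig:f2}. Since the $\lambda$-subband is used by transmitter~1 alone and always contributes the rate (\ref{eq:1}), the whole argument reduces to analyzing the $\bar{\lambda}$-subband, and the threshold $h=\frac{1-a^2}{a^2}$ splits this into the multiplex and overflow cases. I would settle the two cases separately, then take the union over $(\lambda,h)$ and invoke convexity/time-sharing to assemble the full region.

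For the multiplex region ($h\le\frac{1-a^2}{a^2}$) I would use sequential decoding at both receivers: receiver~1 decodes $W_2$ first (giving (\ref{eq:4})) and then $W_1$ after cancellation (giving (\ref{eq:2})), while receiver~2 decodes $W_2$ directly (giving (\ref{eq:3})). The decisive step is to show that the cross-constraint (\ref{eq:4}) at receiver~1 is slack relative to (\ref{eq:3}). This reduces the comparison $\frac{b^2P_2/\bar{\lambda}}{1+P_{1A}/\bar{\lambda}}\ge\frac{P_2/\bar{\lambda}}{1+a^2P_{1A}/\bar{\lambda}}$ to $b^2\left(1+a^2\frac{P_{1A}}{\bar{\lambda}}\right)\ge 1+\frac{P_{1A}}{\bar{\lambda}}$, which holds termwise because $b^2\ge 1$ and $a^2b^2\ge 1$ under $b\ge 1$, $ab>1$. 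With (\ref{eq:4}) redundant, the achievable constraints coincide exactly with (\ref{eq:2}) and (\ref{eq:3}), i.e.\ with $\mathcal{R_{N}}$.

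For the overflow region ($h>\frac{1-a^2}{a^2}$) I would rate-split $W_1$ and treat receiver~1 as a three-user multiple-access channel for $(W_{1p},W_{1c},W_2)$, yielding the bounds (\ref{eq:5})--(\ref{eq:6}), while receiver~2 sequentially decodes $W_{1c}$ then $W_2$. Fourier--Motzkin elimination on this system produces (\ref{eq:7})--(\ref{eq:8}), and combining these with (\ref{eq:1}) through $R_1=\bar{\lambda}R_{1\bar{\lambda}}+\lambda R_{1\lambda}$ and $R_2=\bar{\lambda}R_{2\bar{\lambda}}$ gives the overflow region; the remaining task, deferred to the appendix, is to show this region equals $\mathcal{R_{N}}$.

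I expect this overflow equivalence to be the main obstacle. Unlike $\mathcal{R_{N}}$, which carries only the single-rate bounds (\ref{eq:2}) and (\ref{eq:3}), the post-elimination region carries the extra $b$-dependent sum constraints (\ref{eq:11}), (\ref{eq:12}) and the weighted constraint (\ref{eq:8}), so proving equivalence amounts to showing these are all redundant. I would verify that the binding pair is exactly (\ref{eq:9})--(\ref{eq:10}) (which match (\ref{eq:2})--(\ref{eq:3})) and that the $b$-inflated bounds remain slack, again driven by $ab>1$ and $b\ge 1$. This step is delicate because it requires comparing nonlinear logarithmic expressions after substituting the power relations that tie $P_{1A}$ and $P_{1C}$ to $(\lambda,h)$; the algebra, rather than any new idea, is what makes it the crux. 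Once both cases yield $\mathcal{R_{N}}$, taking the union over all admissible $(\lambda,h)$ establishes that $\mathcal{R_{N}}$ is achievable for the MGIC with $ab>1$.
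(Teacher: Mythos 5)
Your proposal takes essentially the same route as the paper's own proof: the identical two-band non-na\"{i}ve FDM scheme, the same multiplex/overflow split at $h=\frac{1-a^2}{a^2}$, the same argument that (\ref{eq:4}) is redundant against (\ref{eq:3}) via $b^2\ge 1$ and $a^2b^2\ge 1$, simultaneous (MAC) decoding at receiver~1 with sequential decoding at receiver~2 in the overflow case, Fourier--Motzkin elimination to (\ref{eq:7})--(\ref{eq:8}), and deferral of the redundancy of the $b$-dependent constraints to an appendix, exactly as the paper does. One minor correction to your anticipated difficulty: the appendix algebra does not require substituting the power relations tying $P_{1A}$, $P_{1C}$ to $(\lambda,h)$ --- the paper establishes that (\ref{eq:7}), (\ref{eq:11}), (\ref{eq:12}), (\ref{eq:8}) are redundant given (\ref{eq:9})--(\ref{eq:10}) through generic inequalities valid for arbitrary nonnegative powers, using only $a\le 1$, $b\ge 1$, $ab\ge 1$.
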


\begin{corollary}
For the MGICs with $ab>1$, $\mathcal{R_{N}} = \mathcal{R_{HK}}$.
\end{corollary}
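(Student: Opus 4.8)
The plan is to obtain the corollary as an immediate consequence of Theorems~\ref{lemma:Noiseberg} and \ref{theorem: Mixed IC}, by sandwiching $\mathcal{R_N}$ between the two inclusions $\mathcal{R_N}\subseteq\mathcal{R_{HK}}$ and $\mathcal{R_{HK}}\subseteq\mathcal{R_N}$. The first inclusion is the operational (achievability) direction and the second is the identification of the two explicitly described regions; I expect the second to carry essentially all of the content.

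For $\mathcal{R_N}\subseteq\mathcal{R_{HK}}$, I would argue that the transmission scheme used in Theorem~\ref{theorem: Mixed IC} is itself a special case of Gaussian-input Han--Kobayashi coding with time sharing: the two-band non-na\"{i}ve FDM plays the role of the time-sharing variable, the split of $W_1$ into $W_{1p}$ and $W_{1c}$ is exactly HK rate splitting, and the sequential/simultaneous decoding rules are admissible HK decoders. Hence every rate pair in $\mathcal{R_N}$ is achieved by a valid HK configuration, so $\mathcal{R_N}\subseteq\mathcal{R_{HK}}$. This upgrades the bare achievability statement of Theorem~\ref{theorem: Mixed IC} into a genuine set inclusion, which is all that is needed here since the HK region is not known to be the capacity region.

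For the reverse inclusion I would exploit the fact that the three constraints defining $\mathcal{R_{HK}}$ for the MGIC do not depend on $b$. Concretely, those constraints are precisely the Gaussian-input HK bounds of the \emph{weak ZGIC} obtained from (\ref{eq:GIC}) by deleting the interfering link of gain $b$ (formally $b=0$): receiver~$1$ then sees no cross term, so $W_2$ need not be decoded there and user~$2$ carries no common message, leaving exactly the $R_1$, $R_2$ and $R_1+R_2$ bounds listed for $\mathcal{R_{HK}}$. Since $a\le 1$, that ZGIC is weak, so Theorem~\ref{lemma:Noiseberg} applies and gives $\mathcal{R_{HK}}^{\mathrm{ZGIC}}=\mathcal{R_N}$. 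Identifying $\mathcal{R_{HK}}=\mathcal{R_{HK}}^{\mathrm{ZGIC}}$ then yields $\mathcal{R_{HK}}\subseteq\mathcal{R_N}$, and combined with the first inclusion, $\mathcal{R_N}=\mathcal{R_{HK}}$.

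The main obstacle is exactly the equality $\mathcal{R_{HK}}=\mathcal{R_{HK}}^{\mathrm{ZGIC}}$, i.e. justifying that deleting the strong link of gain $b$ leaves the HK region unchanged. This is the simplification that collapses the full MGIC HK region to the three $b$-free bounds, and it is where the hypotheses $b\ge 1$ and $ab\ge 1$ are used: the strong interference lets receiver~$1$ decode $W_2$ essentially for free, so every HK constraint at receiver~$1$ that involves $b$ is dominated by the corresponding constraint at receiver~$2$ (mirroring the redundancy of (\ref{eq:4}) relative to (\ref{eq:3}) already observed in Section~\ref{MGIC}, which holds precisely because $b^2(\bar\lambda+a^2P_{1A})\ge\bar\lambda+P_{1A}$ when $b\ge 1$ and $ab\ge 1$). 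Verifying that \emph{each} such $b$-dependent inequality is redundant—rather than merely plausible—is the one step I would treat with care; once it is settled, the remainder of the argument is bookkeeping.
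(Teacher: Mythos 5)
Your sandwich outline and your first inclusion ($\mathcal{R_N}\subseteq\mathcal{R_{HK}}$, because the scheme of Theorem 2 --- FDM playing the role of time sharing, the split of $W_1$ into $W_{1p},W_{1c}$, and sequential/simultaneous decoding --- is an admissible Gaussian-input HK configuration) agree with the paper. The gap is in your reverse inclusion. You make the \emph{equality} $\mathcal{R_{HK}}=\mathcal{R_{HK}}^{\mathrm{ZGIC}}$ --- i.e., that the full Gaussian-input HK region of the MGIC, with all of its $b$-dependent constraints and with user 2 also allowed to rate-split, collapses to the three $b$-free bounds --- the load-bearing step, and you leave it unproven, acknowledging it as ``the one step I would treat with care.'' That equality is essentially \cite[Theorem 12]{Motahari&Khandani:09IT}; establishing it directly requires Fourier--Motzkin elimination on the full MGIC HK region followed by a constraint-by-constraint redundancy check under $b\ge 1$, $ab\ge 1$. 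This is exactly the labor the corollary is designed to avoid: the paper's Appendix performs such a redundancy check only for the specific scheme of Theorem 2, never for the whole HK region. As written, your proof assumes its hardest claim.

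The paper instead needs only the one-directional, ``interference cannot help'' inclusion: with Gaussian inputs, $\mathcal{R_{HK}}$ of the MGIC is contained in $\mathcal{R_{HK}}$ of the ZGIC, because for any fixed time sharing, powers and splits, the receiver-2 constraints are identical in the two channels, while receiver 1 of the MGIC sees an extra independent Gaussian interference term and thus has only tighter constraints. Combining this soft inclusion with Theorem 1 ($\mathcal{R_{HK}}^{\mathrm{ZGIC}}=\mathcal{R_N}$) and with Theorem 2 plus your own first inclusion ($\mathcal{R_N}\subseteq\mathcal{R_{HK}}$ of the MGIC) closes the chain $\mathcal{R_{HK}}\subseteq\mathcal{R_{HK}}^{\mathrm{ZGIC}}=\mathcal{R_N}\subseteq\mathcal{R_{HK}}$, forcing equality throughout. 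In this logic, the identity you tried to prove is an \emph{output} of the corollary (an indirect proof of \cite[Theorem 12]{Motahari&Khandani:09IT}), not an input. A telltale sign of the inversion: if your unproven equality were available, Theorem 2 would be superfluous, since the corollary would already follow from Theorem 1 alone. To repair your argument, replace the equality by the easy monotonicity inclusion above and let Theorem 2 supply the reverse direction.
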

\begin{proof}
On the one hand, with Gaussian input, $\mathcal{R_{HK}}$ of the MGIC is a subset of that of the ZGIC. On the other hand, $\mathcal{R_{N}} \subseteq \mathcal{R_{HK}}$ for the MGIC in general. But Theorem 1 states that $\mathcal{R_{N}} = \mathcal{R_{HK}}$ for the ZGIC. This establishes the equivalence between $\mathcal{R_{N}}$ and $\mathcal{R_{HK}}$ for the MGICs with $ab>1$. 
\end{proof}

This is consistent with \cite[Theorem 12]{Motahari&Khandani:09IT} that states for the MGICs with $ab>1$, $\mathcal{R_{HK}}$ is equivalent to that of the corresponding ZGIC.
\section{Conclusion} \label{Conclusion}
This paper established the equivalence between Costa's noiseberg region and the HK region with Gaussian input for the ZGIC. For the MGICs with $ab>1$, an achievable rate region was developed by rate splitting and a mixture of sequential and simultaneous decoding. By comparing the new region to the noiseberg region of the ZGIC, we proved that it is a simplification of the HK region for the MGICs with $ab>1$ and with Gaussian codebook.

\section*{Acknowledgment}
The authors would like to thank Prof. Max Costa for helpful discussions about the noiseberg region.

\appendix
In the appendix we prove the equivalence between the achievable rate region in the overflow region and $\mathcal{R_{N}}$.

Comparing this region with $\mathcal{R_{N}}$, we see that (\ref{eq:9})(\ref{eq:10}) and (\ref{eq:2})(\ref{eq:3}) are identical. Then it remains to show that (\ref{eq:7})(\ref{eq:11})-(\ref{eq:8}) are redundant given (\ref{eq:9})(\ref{eq:10}).

Eq.~(\ref{eq:7}) is redundant since
\small
\begin{eqnarray}
&& \frac{1}{2}\log \left(1+\frac{P_{1A}}{\bar{\lambda}}\right) +  \frac{1}{2} \log \left(1+ \frac{a^{2} \frac{P_{1C}}{\bar{\lambda}}}{1+a^{2}\frac{P_{1A}}{\bar{\lambda}}+ \frac{P_2}{\bar{\lambda}}}\right) \nonumber \\
&&\leq \frac{1}{2}\log \left(1+\frac{P_{1A}}{\bar{\lambda}}+\frac{P_{1C}}{\bar{\lambda}}\right) \nonumber \\
&\Leftrightarrow& \frac{a^{2} P_{1C}}{\bar{\lambda} + a^{2}P_{1A} + P_2} \left(1 + \frac{P_{1A}}{\bar{\lambda}}\right) \leq \frac{P_{1C}}{\bar{\lambda}} \nonumber \\
&\Leftrightarrow& a^{2}\bar{\lambda} + a^{2}P_{1A} \leq \bar{\lambda} + a^{2}P_{1A} + P_2 \nonumber \\
&\Leftrightarrow& (1-a^{2})\bar{\lambda} + P_2 \geq 0, \nonumber 
\end{eqnarray}
\normalsize
which is trivially true.

Eq.~(\ref{eq:11}) is redundant, since it is superseded by (\ref{eq:9}) and (\ref{eq:10}), i.e., we will show
\small
\begin{eqnarray}
&& \frac{1}{2}\log \left(1+\frac{P_{1A}}{\bar{\lambda}}\right) +  \frac{1}{2} \log \left(1+ \frac{a^{2} \frac{P_{1C}}{\bar{\lambda}}}{1+a^{2}\frac{P_{1A}}{\bar{\lambda}}+ \frac{P_2}{\bar{\lambda}}}\right) \nonumber \\
&&+\frac{1}{2} \log \left(1+ \frac{ \frac{P_2}{\bar{\lambda}}}{1+a^{2}\frac{P_{1A}}{\bar{\lambda}}}\right)  \nonumber\\
&&\leq \frac{1}{2}\log \left(1+\frac{P_{1A}}{\bar{\lambda}}+\frac{P_{1C}}{\bar{\lambda}}+\frac{b^{2}P_2}{\bar{\lambda}}\right),\nonumber
\end{eqnarray}
\normalsize
or, equivalently,
\small
\begin{eqnarray}
&&\frac{a^{2}P_{1C}}{\bar{\lambda}+a^{2}P_{1A}+P_2}\left(1+ \frac{P_{1A}}{\bar{\lambda}}\right)+ \frac{P_2}{\bar{\lambda}+a^{2}P_{1A}}\left(1+ \frac{P_{1A}}{\bar{\lambda}}\right) \nonumber\\
&&+ \frac{P_2}{\bar{\lambda}+a^{2}P_{1A}}\left(1+ \frac{P_{1A}}{\bar{\lambda}}\right)\frac{a^{2}P_{1C}}{\bar{\lambda}+a^{2}P_{1A}+P_2} \nonumber\\
&&\leq  \frac{P_{1C}}{\bar{\lambda}}+\frac{b^{2}P_2}{\bar{\lambda}}. \label{eq:13}
\end{eqnarray}
\normalsize
In order to prove (\ref{eq:13}), it suffices to show
\small
\begin{eqnarray}
\frac{P_2}{\bar{\lambda}+a^{2}P_{1A}}\left(1+ \frac{P_{1A}}{\bar{\lambda}}\right) \leq \frac{b^{2}P_2}{\bar{\lambda}} \label{eq:14}
\end{eqnarray}
\normalsize
and
\small
\begin{eqnarray}
&& \frac{P_2}{\bar{\lambda}+a^{2}P_{1A}}\left(1+ \frac{P_{1A}}{\bar{\lambda}}\right)\frac{a^{2}P_{1C}}{\bar{\lambda}+a^{2}P_{1A}+P_2} \nonumber \\
&&+\frac{a^{2}P_{1C}}{\bar{\lambda}+a^{2}P_{1A}+P_2}\left(1+ \frac{P_{1A}}{\bar{\lambda}}\right) \leq  \frac{P_{1C}}{\bar{\lambda}}. \label{eq:15}
\end{eqnarray}
\normalsize
Eq.~(\ref{eq:14}) is equivalent to
\small
\begin{eqnarray}
&&\frac{b^{2}}{\bar{\lambda}} \geq \frac{1+ \frac{P_{1A}}{\bar{\lambda}}}{\bar{\lambda} + a^{2}P_{1A}} \nonumber \\
&\Leftrightarrow& (b^{2}-1)\bar{\lambda}+(a^{2}b^{2}-1)P_{1A} \geq 0 \nonumber 
\end{eqnarray}
\normalsize
which is obviously true.
Eq.~(\ref{eq:15}) is equivalent to
\small
\begin{eqnarray}
&& \frac{P_{1C}}{\bar{\lambda}} \geq \frac{a^{2}P_{1C}}{\bar{\lambda}+a^{2}P_{1A}+P_2}\left(1+ \frac{P_{1A}}{\bar{\lambda}}\right)\left(1+\frac{P_2}{\bar{\lambda}+a^{2}P_{1A}}\right)\nonumber\\ \label{eq:17}\\
&\Leftrightarrow& \bar{\lambda} + a^{2}P_{1A} + P_2 \geq a^{2}(\bar{\lambda} + P_{1A})\left(1+\frac{P_2}{\bar{\lambda}+a^{2}P_{1A}}\right) \nonumber \\
&\Leftrightarrow& \bar{\lambda}(1-a^{2})(\bar{\lambda}+a^2P_{1A}+P_2) \geq 0. \nonumber 
\end{eqnarray}
\normalsize
Thus (\ref{eq:14}) and (\ref{eq:15}) are true and (\ref{eq:11}) is redundant.

Comparing (\ref{eq:9})(\ref{eq:10}) and (\ref{eq:12}), we see that (\ref{eq:12}) being redundant is equivalent to
\small
\begin{eqnarray}
&& \frac{1}{2}\log \left(1+\frac{P_{1A}}{\bar{\lambda}}\right)+\frac{1}{2} \log \left(1+ \frac{ \frac{P_2}{\bar{\lambda}}}{1+a^{2}\frac{P_{1A}}{\bar{\lambda}}}\right) \nonumber \\
&& \leq \frac{1}{2}\log \left(1+\frac{P_{1A}}{\bar{\lambda}}+\frac{b^{2}P_2}{\bar{\lambda}}\right) \label{eq:16}\\
&\Leftrightarrow& 1+ \frac{P_{1A}}{\bar{\lambda}} + \frac{P_2}{\bar{\lambda}+a^{2}P_{1A}}+ \frac{P_{1A}}{\bar{\lambda}}\frac{P_2}{\bar{\lambda}+a^{2}P_{1A}} \nonumber \\
&& \leq 1 + \frac{P_{1A}}{\bar{\lambda}} +\frac{b^{2}P_2}{\bar{\lambda}} \nonumber
\end{eqnarray}
\normalsize
which is equivalent to (\ref{eq:14}). Thus (\ref{eq:12}) is redundant.

Comparing (\ref{eq:9})(\ref{eq:10}) and (\ref{eq:8}), we see that (\ref{eq:8}) being redundant is equivalent to
\small
\begin{eqnarray}
&& 2*\frac{1}{2} \log \left(1+ \frac{ \frac{P_2}{\bar{\lambda}}}{1+a^{2}\frac{P_{1A}}{\bar{\lambda}}}\right) + \frac{1}{2}\log \left(1+\frac{P_{1A}}{\bar{\lambda}}\right) \nonumber \\ &&+  \frac{1}{2} \log \left(1+ \frac{a^{2} \frac{P_{1C}}{\bar{\lambda}}}{1+a^{2}\frac{P_{1A}}{\bar{\lambda}}+ \frac{P_2}{\bar{\lambda}}}\right) \nonumber \\
&& \leq \frac{1}{2}\log \left(1+\frac{P_{1A}}{\bar{\lambda}}+\frac{b^{2}P_2}{\bar{\lambda}}\right) + \frac{1}{2}\log \left(1+\frac{P_{1C}}{\bar{\lambda}}+\frac{b^{2}P_2}{\bar{\lambda}}\right). \nonumber
\end{eqnarray}
\normalsize
It suffices to show (\ref{eq:16}) and
\small
\begin{eqnarray}
&&\frac{1}{2} \log \left(1+ \frac{a^{2} \frac{P_{1C}}{\bar{\lambda}}}{1+a^{2}\frac{P_{1A}}{\bar{\lambda}}+ \frac{P_2}{\bar{\lambda}}}\right)+\frac{1}{2} \log \left(1+ \frac{ \frac{P_2}{\bar{\lambda}}}{1+a^{2}\frac{P_{1A}}{\bar{\lambda}}}\right)\nonumber\\
&& \leq \frac{1}{2}\log \left(1+\frac{P_{1C}}{\bar{\lambda}}+\frac{b^{2}P_2}{\bar{\lambda}}\right) \nonumber
\end{eqnarray}
\normalsize
which is equivalent to
\small
\begin{eqnarray}
&&1 + \frac{P_2}{\bar{\lambda}+a^{2}P_{1A}}  + \frac{P_2}{\bar{\lambda}+a^{2}P_{1A}}\frac{a^{2}P_{1C}}{\bar{\lambda}+a^{2}P_{1A}+P_2} \nonumber \\
&&+ \frac{a^{2}P_{1C}}{\bar{\lambda}+a^{2}P_{1A}+P_2} \leq 1 + \frac{P_{1C}}{\bar{\lambda}} + \frac{b^{2}P_2}{\bar{\lambda}}. \nonumber
\end{eqnarray}
\normalsize
Then it is sufficient to show
\small
\begin{eqnarray}
\frac{P_2}{\bar{\lambda}+a^{2}P_{1A}} \leq \frac{b^{2}P_2}{\bar{\lambda}} \nonumber \label{eq:18}
\end{eqnarray}
\normalsize
and
\small
\begin{eqnarray}
\left(1+\frac{P_2}{\bar{\lambda}+a^{2}P_{1A}}\right)\frac{a^{2}P_{1C}}{\bar{\lambda}+a^{2}P_{1A}+P_2} \leq \frac{P_{1C}}{\bar{\lambda}}, \nonumber \label{eq:19}
\end{eqnarray}
\normalsize
which are trivially true from (\ref{eq:14}) and (\ref{eq:17}).  Thus (\ref{eq:8}) is redundant given (\ref{eq:9}) and (\ref{eq:10}).

\end{document}